\documentclass[a4paper, 10pt]{article}
\usepackage[T1]{fontenc}
\usepackage{amssymb,amsfonts,amsmath,amsthm,euscript,makeidx,pifont}
\usepackage[colorlinks=true]{hyperref}
\usepackage{mathrsfs}
\usepackage{graphics}
\usepackage{tikz}

\newtheorem{theorem}{Theorem}[section]

\newtheorem{remark}{Remark}[section]
\newtheorem{proposition}{Proposition}[section]
\newtheorem{lemma}{Lemma}[section]
\newtheorem{definition}{Definition}[section]

\newtheorem{ass}{Assumption}[section]

\newcommand{\er}{\mathbb{R}}
\newcommand{\WW}{\mathbb{W}}
\newcommand{\cW}{{\mathcal{W} }}
\newcommand{\ocW}{{\overline{\mathcal{W} }}}

\newcommand{\bi}{{\bar{i}}}
\newcommand{\Ca}{\mathscr{A}}

\newcommand{\SCM}{\mathscr{C}}
\newcommand{\cS}{\mathcal S}
\newcommand{\cI}{\mathcal I}

\newcommand{\pguess}{{\tau^{\text{guess}}}}
\newcommand{\opguess}{\overline{\tau}^{\text{guess}}}

\newcommand{\dom}{\text{Dom}}
\newcommand{\Of}{{\mathcal O}} 
\newcommand{\Off}{{\mathcal O} \!\!\!\!{\mathcal O} }
\newcommand{\bfs}{{\bf s}}
\newcommand{\bfA}{{\bf A}}

\newcommand{\bfC}{{\bf C}}

\newcommand{\op}{\overline{p}}
\newcommand{\uup}{\underline{p}}
\newcommand{\plol}{p_{\text{lolc}}}

\newcommand{\pcar}{{p^{\text{CO}_{2}}}}
\newcommand{\Pnlt}{{\mathfrak{p}}}
\newcommand{\SSigma}{{\Large \bf \Sigma}}

\newcommand{\sa}{s^{\ast}}
\newcommand{\bfsa}{{\bf s}^{\ast}}
\newcommand\ind{\leavevmode\hbox{\rm \small1\kern-0.35em\normalsize1}}
\newcommand{\pelec}{{p^{\text{elec}}}}
\def\co2{$\text{CO}_{2}$}
\def\etc{\emph{etc\/}}

\usepackage{color}
\definecolor{Violet}{RGB}{200,0,200}

\title{Nash equilibrium for coupling of \co2 allowances and electricity markets\thanks{This work was partly supported by Grant  0805C0098 from ADEME.}}
\author{Mireille Bossy\thanks{Inria, mireille.bossy@inria.fr} \and 
Nadia Ma\"{i}zi\thanks{MinesParisTech, nadia.maizi@mines-paristech.fr} \and 
Odile Pourtallier\thanks{Inria, odile.pourtallier@inria.fr}}
\date{\today}

\begin{document}
\maketitle
\begin{abstract}
In this note, we present an existence result of a Nash equilibrium between electricity producers selling their production on an electricity market and buying \co2 emission allowances on an auction carbon market. The producers' strategies integrate the coupling of the two markets via the cost functions of the electricity production. We set out a clear Nash equilibrium that can be used to compute equilibrium prices on both markets as well as the related electricity produced and \co2 emissions covered.
\end{abstract}

\section{Introduction}  
The aim of this paper is to develop analytic tools, in order to design a relevant mechanism for carbon markets, where relevant refers to emission reduction. For this purpose, we focus on electricity producers in a power market linked to a carbon market. In this context, where the number of agents is limited, a standard game theory approach applies.
The producers are considered as players behaving on the two financial markets represented here by carbon and electricity.  We establish a Nash equilibrium for this non-cooperative $J$-player game through a coupling mechanism between the two markets.
  
The original idea comes from the French electricity sector, where the spot electricity market is often used to satisfy peak demand. Producers behavior is demand driven and linked to the maximum level of electricity production. Each producer strives to maximize its market share.  In the meantime, it has to manage the environmental burden associated with its electricity production through a mechanism inspired by the  EU ETS\footnote{European Emission Trading System}  framework~: each producer emission level must be balanced by a permit or through the payment of a penalty. Emission permit allocations are simulated through a carbon market that allows the producers to buy the allowances at an auction. Our focus on the electricity sector is motivated by its introduction in phase III of the EU ETS, and its prevalence in the emission share. In the present paper, the design assumptions made on the carbon market are dedicated to foster emissions reduction in the entire electricity sector.

Based on a static elastic demand curve (referring to the times stages in an organized electricity market, mainly day-ahead and intra-day), we solve the local problem of establishing a non-cooperative Nash equilibrium for the two coupled markets. 

While literature mainly addresses profit maximization, our share maximization approach deals with profit through specific assumptions: sale at no loss and striking a balance between the purchase of allowances and the carbon footprint of the electricity generated.
Here the market is driven through demand dynamics rather than electricity spot prices dynamics as it has been done in recent works (see \cite{carmona-coulon-schwarz-13a}\cite{carmona-coulon-schwarz-13} \cite{carmona-delarue-etal-13}). 

In Section \ref{sec:market-rules}, we formalize the markets (carbon and electricity) rules and the associated admissible set of players' coupled strategies.
We then first study the Nash equilibrium  on the electricity market alone (see Proposition \ref{propo-Nash}).
Section \ref{sec:nash} is devoted to our Nash equilibrium results.

\section{Coupling markets mechanism}\label{sec:market-rules}
\subsection{Electricity market}
In the electricity market, the demand is  aggregated and summarized by a function  $p\mapsto D(p)$, where $D(p)$ is the quantity of electricity that buyers are ready to obtain at maximal unit price  $p$.  We assume the following~:
\begin{ass}\label{hypo:demande}
The demand function $D(\cdot):[0,+\infty)\rightarrow[0+\infty)$  is decreasing, left continuous, and such that  $D(0) >0$.
\end{ass}

Each producer $j \in \{1, \ldots, J \}$ is characterized by a finite production capacity $\kappa_j$ and a 
bounded and increasing function 
$ c_{j}: [0,\kappa_{j}] \longrightarrow \er^{+}$ that associates a  marginal production cost to any quantity $q$ of 
electricity. These marginal production costs depend on several exogenous parameters 
reflecting the technical costs associated with electricity production e.g. energy prices, O\&M costs, taxes, carbon penalties \etc... This parameter dependency makes possible to build different market coupling mechanisms. In the following we use it to link the carbon and the  electricity markets. 

The merit order ranking features marginal cost functions sorted according to their production costs. These are therefore increasing staircase functions whereby each stair refers to the marginal production cost of a specific unit owned by the producer.

The producers trade their electricity on a dedicated market. For a given producer $j$, the strategy  consists in a function that makes it possible to establish an ask price on the electricity market, defined as
\begin{align*}
s_{j} : & \SCM_j \times \er^{+} \longrightarrow \er^{+} \\
& (c_{j}(\cdot), q) \longrightarrow s_{j}(c_{j}(\cdot), q), 
\end{align*}
where $\SCM_j$ the set of  marginal production cost functions  {\bf are explicitly given in the following} (see \eqref{def:set-C_j}). { \\}
$s_{j}(c_{j}(\cdot), q)$ is the unit price at which the producer is ready to sell quantity  $q$ of electricity. An admissible strategy  fulfills the following sell at no loss constraint
\begin{equation}\label{contrainteStrategie}
s_{j}(c_{j}(\cdot), q) \geq c_{j}(q), \quad \forall q \in  \dom(c_{j}).
\end{equation}
For example we can take $s_{j}(c_{j}(\cdot), q) = c_{j}(q)$ or $s_{j}(c_{j}(\cdot), q) = c_{j}(q)+ \lambda(q)$, where $\lambda(q)$ stands for any additional  profit. \\
As mentioned in the introduction, the  constraint \eqref{contrainteStrategie} guarantees profitable trade in as much as equilibrium established through this class of strategy will bring benefit to each producer. This establishes a link between market share maximization and profit maximization paradigms.

Let us denote $\cS$ as the  class of admissible strategy profiles  on electricity market. We have
\begin{equation}\label{classeStratAdmiss}
\begin{aligned}
\cS = \left \{ 
\begin{array}{l}
\begin{array}{rcl}
\bfs = (s_1,\ldots,s_j); \;  s_{j}: \SCM_j \times \er^{+} & 
\longrightarrow & 
\er^{+} \\ 
(c_{j}(\cdot), q) &
\longrightarrow &
s_{j}(c_{j}(\cdot), q)
\end{array} \\ 
\begin{array}{l}
\mbox{ such that }s_{j}(c_{j}(\cdot), q) \geq c_{j}(q), \quad \forall q \in \dom(c_{j})
\end{array}
\end{array}
\right\}.
\end{aligned}
\end{equation}
As a function of $q$, $s_{j}(c_{j}(\cdot),q)$ is bounded on $\dom(c_{j})$. 
For the sake of clarity, we define for each $q \not \in \dom(c_{j}) $, $s_{j}(c_{j}(\cdot),q) = \plol$, where $\plol$ is the loss of load cost, chosen as any overestimation of the  maximal production costs.

For producer $j$'s  strategy  $s_{j}$,  we define the associated ask size at price   $p$ as 
\begin{equation}\label{defOffrej}
\Of(c_{j}(\cdot),s_{j};p) := \sup\{q, \; s_{j} (c_{j}(\cdot), q) < p \}.
\end{equation}
Hence $\Of(c_{j}(\cdot),s_{j};p)$ is the maximum quantity of electricity at unit price $p$  supplied by producer $j$ on the market. 

\begin{remark}\label{property:offre croissante}
\item[(i)] The ask size function  $p\mapsto \Of(c_{j}(\cdot),s_{j};p)$  is, with respect to $p$,  an increasing surjection from   $[0,+\infty)$ to $[0,\kappa_j]$, right
continuous and such that $\Of(c_{j}(\cdot),s_{j};0)=0$.  
For an increasing  strategy  $s_{j}$, $\Of(s_{j};.)$ is its generalized inverse function with respect to $q$.
\item[(ii)] Given two strategies  $q\mapsto s_{j}(c_{j}(\cdot), q)$ and  $q\mapsto s_j'(c_{j}(\cdot), q)$  such that $s_{j}(c_{j}(\cdot), q) \leq 
s_j'(c_{j}(\cdot), q)$,  for all $q\in \dom(c_{j})$ 
we have for any positive $p$
\begin{equation*}
\Of(c_{j}(\cdot),s_{j};p) \geq \Of(c_{j}(\cdot),s_j';p).
\end{equation*}
Indeed, if $p_{1} \geq p_{2}$ then $\{ q, \; s_{j}(c_{j}(\cdot), q) \leq p_{2} \} \subset \{ q, \; s_{j}(c_{j}(\cdot), q) \leq p_{1} \}$  from which we deduce that $\Of(c_{j}(\cdot),s_j;\cdot)$ is
increasing. 
Next, if $s_j(c_{j}(\cdot),\cdot) \leq s_j'(c_{j}(\cdot),\cdot)$, for any fixed $p$, we have  
$\{ q, \; s_{j}'(c_{j}(\cdot),q) \leq p \} \subset \{ q, \; s_{j}(c_{j}(\cdot), q) \leq p \}$ from which 
the reverse order follows for the asks. 
\end{remark}

We now describe the electricity market clearing. Note that from the market view point, the dependency of the offers with respect to the marginal cost does not need to be explicit. For the sake of clarity, 
we will write $s_{j}(q)$ and $\Of(s_j;p)$ instead of $s_{j}(c_{j}(\cdot),q)$, $\Of(c_{j}(\cdot),s_j;p)$. The dependency will be expressed explicitly whenever needed.

By aggregating the  $J$  ask size functions, we can define the overall supply function $p\mapsto \Off(\bfs;p)$ for producers strategy 
profile $\bfs= (s_{1}, \ldots, s_{J})$ as~:
\begin{equation}
\Off(\bfs; p) = \sum_{j = 1}^{J} \Of(s_{j};p).
\end{equation}
Hence, for any producer strategy profile  $\bfs$, $\Off(\bfs ; p)$ is the quantity of electricity that can be sold on 
the market at unit price  $p$.  

The overall supply function $p\mapsto \Off(\bfs; p)$ is an increasing surjection defined from  $[0,+\infty)$ to $[0,\sum_{j=1}^J\kappa_j]$,
such that $\Off(\bfs;0)=0$. 

\subsubsection{Electricity market clearing}
Taking producers strategy profile $\bfs= (s_{1}(\cdot), \ldots, s_{J}(\cdot))$ the market sets the electricity market
price $\pelec(\bfs)$ together with the quantities $(q_{1}(\bfs), \ldots, q_J(\bfs))$ of electricity sold by each producer. 

The market clearing price $\pelec(\bfs)$ is the unit price paid to each producer for the quantities $q_{j}(\bfs)$ of 
electricity. The price  $p(\bfs)$ may be defined as a price whereby offer satisfies the demand. As  we are working with a general non-increasing demand curve (eventually locally inelastic), the price that satisfies the demand is not necessarily unique.  We thus define the clearing price generically with the following definition.  
\begin{definition}[The clearing electricity price.]\label{def:clearingElec}
Let us define
\begin{equation}\label{reacmarche-prix}
\begin{aligned}
&\uup(\bfs) = \inf \left\{ p > 0 ; \;  \Off(\bfs ; p) >  D(p) \right\} \\
\mbox{ and }\quad&\\
&\op(\bfs) = \sup \left\{ p\in [\uup(\bfs),\plol];  D(p) = D(\uup(\bfs))\right\}
\end{aligned}
\end{equation}
with the convention that $\inf\emptyset = \plol$. The clearing price may then be established as any  $\pelec(\bfs) \in [\uup(\bfs), \op(\bfs)]$ as an output of a specific market clearing rule. For consistency of the price, the market rule  must be such that for any two strategy profiles $\bfs$ and $\bfs '$, 
\begin{equation}\label{regleChoixPrix}
\begin{aligned}
\mbox{if } \uup({\bfs}) < \uup({\bfs '}) \mbox{ then } \pelec({\bfs}) <  \pelec({\bfs '}), \\
\mbox{if } \uup({\bfs}) = \uup({\bfs '}) \mbox{ then } \pelec({\bfs}) =  \pelec({\bfs '}). 
\end{aligned}   
\end{equation}
\end{definition}

Note that $\uup(\bfs)\neq \op(\bfs)$ only if the demand curve $p\mapsto D(p)$ is constant on some interval $[\uup(\bfs),\uup(\bfs)+ \epsilon]$. 

\begin{figure}[ht]
\begin{center}
\begin{tikzpicture}[xscale=7,yscale=0.02]\footnotesize
 \newcommand{\xone}{-.02}
 \newcommand{\xtwo}{ 1.04}
 \newcommand{\yone}{-.4}
 \newcommand{\ytwo}{185}

\begin{scope}<+->;

\draw[black] (0,0) node[anchor=north east] {$0$};

% axes
\draw[black,thick,->] (\xone, 0) -- (\xtwo, 0);
\draw[black,thick,] (0.95, -15)  node[right] {price};
\draw[black,thick,->] (0, \yone) -- (0, \ytwo)node[left] {quantity};
\end{scope}
\begin{scope}[thick,blue]
\draw (0,25) node {$\bullet$} ;
\draw (0,25) -- (0.2,25);
\filldraw[very thin,opacity=.2] (0,0.0) rectangle (0.2,25);

\draw (0.2,40) node {$\bullet$} ;
\draw (0.2,40) -- (0.4,40);
\filldraw[very thin,opacity=.2] (0.2,0) rectangle (0.4,40);

\draw (0.4,50) node {$\bullet$} ;
\draw (0.4,50) -- (0.55,50);
\filldraw[very thin,opacity=.2] (0.4,0) rectangle (0.55,50);

\draw (0.55,100) node {$\bullet$} ;
\draw (0.55,100) -- (0.65,100);
\filldraw[very thin,opacity=.2] (0.55,0) rectangle (0.65,100);
\draw (0.59,43) node[right] {Total offer $p\mapsto \Off(p)$};

\draw (0.65,120) node {$\bullet$} ;
\draw (0.65,120) -- (0.9,120);
\filldraw[very thin,opacity=.2] (0.65,0) rectangle (0.9,120);

\draw (0.9,150) node {$\bullet$} ;
\draw (0.9,150) -- (1.0,150);
\filldraw[very thin,opacity=.2] (0.9,0) rectangle (1.0,150);
\end{scope}

\begin{scope}[thick,red]
\draw (0.17,100) node[right,above] {Demand $p \mapsto D(p)$} ;
\draw (0.15,175) node {$\bullet$} ;
\draw (0,175) -- (0.15,175);
\filldraw[very thin,opacity=.2] (0.0,0.0) rectangle (0.15,175);

\draw (0.27,143) node {$\bullet$} ;
\draw (0.15,143) -- (0.27,143);
\filldraw[very thin,opacity=.2] (0.15,0) rectangle (0.27,143);

\draw (0.32,130) node {$\bullet$} ;
\draw (0.27,130) -- (0.32,130);
\filldraw[very thin,opacity=.2] (0.27,0) rectangle (0.32,130);

\draw (0.7,65) node {$\bullet$} ;
\draw (0.32,65) -- (0.7,65);
\filldraw[very thin,opacity=.2] (0.32,0) rectangle (0.7,65);

\draw (0.9,32) node {$\bullet$} ;
\draw (0.7,32) -- (0.9,32);
\filldraw[very thin,opacity=.2] (0.7,0) rectangle (0.9,32);

\draw (0.9,19) -- (1,19);
\filldraw[very thin,opacity=.2] (0.9,0) rectangle (1,19);
\end{scope}

\begin{scope}[black]
\draw[dashed] (0.55,0.0) -- (0.55,150);
\draw (0.55,0.0) node {$\bullet$} ; 
\draw[thin,<-]  (0.55,-4) -- (0.45,-14) node[below] {$\uup(\bfs)$};

\draw[dashed] (0.7,0.0) -- (0.7,150);
\draw (0.7,0.0) node {$\bullet$} ; 
\draw[thin,<-]  (0.7,-4) -- (0.80,-14) node[below] {$\op(\bfs)$};

\draw[dashed]  (0.32,65)  --  (-0.0,65)  node {$\bullet$} node[left]{quantity sold};
\end{scope}
\end{tikzpicture}
\end{center}
\label{clearing} 
\end{figure}

Note also that price $\uup(\bfs)$ is well defined in the case 
where demand does not strictly decrease. This includes the case where demand is constant. 
In such case, $\uup(\bfs)=\plol$ only if the demand curve never crosses the offer. 

Next, we define the quantity of electricity sold at price $\pelec(\bfs)$. 
When $\Off(\bfs;\pelec(\bfs)) \leq D(\pelec(\bfs))$, each producer sells $\Of(\bfs_{j};\pelec(\bfs))$, but cases where  $\Off(\bfs;\pelec(\bfs)) >  D(\pelec(\bfs))$ may occur, requiring the introduction of an auxiliary rule to share $D(\pelec(\bfs))$ among the producers that propose $\Off(\bfs;\pelec(\bfs))$. In such a case, $\uup(\bfs)$ is a discontinuity point of 
$\Off(\bfs;\cdot)$ and/or $\uup(\bfs) < \pelec(\bfs)$. We can split the offer as follows: 
\begin{align*}
\Off(\bfs ; \pelec(\bfs)) = \sum_{j=1}^J \Of(s_j;\uup(\bfs)^-) + \sum_{j=1}^J \Delta^- \Of(s_j;\pelec(\bfs)), 
\end{align*}
where
$ \Delta^- \Of(s_j;\pelec(\bfs)) := \Of(s_{j};\pelec(\bfs)) - \Of(s_{j}; \uup(\bfs)^{-})$.

The market's choice is to fully accept the ask size of producers with continuous ask size curve at point $\uup(\bfs^-)$. For producers with discontinuous ask size curve at $\pelec(\bfs)$, a market rule based on proportionality that favors abundance, is used to share the remaining part of the supply.  More precisely we define $\varphi_{j}(\bfs)$, the quantity of electricity sold by $j$, as 
\begin{equation}\label{reacmarche-qantite-elec}
\begin{aligned}
\quad \varphi_{j}(\bfs) = \left\{
\begin{array}{l}
\Of(\bfs_{j}; \pelec(\bfs)),\\
\quad \mbox{ if }D(\pelec(\bfs))\geq \Off(\bfs ; \pelec(\bfs)),  \\ \\ 
\Of(\bfs_{j}; \uup(\bfs)^-)+\Delta^-\Of(\bfs_{j};\pelec(\bfs))\dfrac{D(\pelec(\bfs)) - \Off(\bfs ; \uup(\bfs)^{-})}{\displaystyle \Delta^- \Off(\bfs ; \pelec(\bfs))},\\
\quad \mbox{ if }D(\pelec(\bfs)) < \Off(\bfs ; \pelec(\bfs)),
\end{array}\right.
\end{aligned}
\end{equation}
where
$
 \Delta^- \Off(\bfs ; \pelec(\bfs)) := \sum_{j=1}^{J}  \Delta^- \Of(s_j;\pelec(\bfs)) > 0$. \\
Note that, when $D(\pelec(\bfs)) < \Off(\bfs ; \pelec(\bfs))$ then $\Delta^- \Off(\bfs ; \pelec(\bfs)) > 0$.  
Note also that we always have  
\begin{align}
\sum_{i=1}^J \varphi_{j}(\bfs)  = D(\pelec(\bfs))\wedge \Off(\bfs ; \pelec(\bfs)).
\end{align}

\subsection{Carbon market}

Producers are penalized according to their emission level if they do not own allowances. Hence
independently from their position on the  electricity market,  producers  buy \co2 emission allowances on a \co2 auction market. 
This market has a finite known quantity $\WW$ of \co2  emission allowances available.

On this market, producers adopt a strategy that consists in a series of bids which may be reorganized in a decreasing  function $w \mapsto A_{j}(w)$ 
defined from  $[0,+\infty)$ to $[0,+\infty)$.  Quantity $A_{j}(w)$ is the unit price that producer $j$ is ready to pay for
quantity $w$ of  \co2 allowance. $\Ca$ denotes the strategy profile set on the \co2 market,
\[
\Ca := \{ \bfA = (A_{1},\ldots ,A_{J}); \mbox{s.t. }A_k:[0,+\infty)\rightarrow[0,+\infty) \mbox{ is decreasing } \}.
\]
Strategy  $ A_{j}$ is associated with a supply (to buy) function, denoted by $p \mapsto \Theta(A_j;p)$. The quantity $\Theta(A_j;p)$ is
the maximum quantity that producer $j$ is ready to buy at price $p$. It is a decreasing left continuous function defined
as 
\begin{align*}
\Theta (A_j;p) = \sup\{w,\;A_j(w)\geq p\}.
\end{align*}
The \co2 market reacts by aggregating the $J$ offers by
$
{{\bf{\Theta}}}(\bfA; p) = \sum_{j=1}^J \Theta (A_j;p),
$
and the clearing market price is established following a {\it second item auction} as~:
\begin{align}\label{reacmarche-prix-quotas}
\pcar(\bfA) := \inf\{p,\;{\bf{\Theta}}(\bfA; p)< \WW\}.
\end{align}

\begin{figure}
\begin{center}
\begin{tikzpicture}[xscale=8,yscale=0.02]\footnotesize
%\begin{tikzpicture}[xscale=4,yscale=0.01]\footnotesize
 \newcommand{\xone}{-.02}
 \newcommand{\xtwo}{ 1.04}
 \newcommand{\yone}{-.4}
 \newcommand{\ytwo}{185}

\begin{scope}<+->;
\draw[black] (0,0) node[anchor=north east] {$0$};
\draw[black,thick,->] (\xone, 0) -- (\xtwo, 0);
\draw[black,thick,] (0.95, -15)  node[right] {$w$};
\draw[black,thick,->] (0, \yone) -- (0, \ytwo)node[left] {price};
\end{scope}

\begin{scope}[thick,Violet]
\draw (0.20,100) node[right,above] {Aggregated bid curve} ;
\draw (0.15,175) node {$\bullet$} ;
\draw (0,175) -- (0.15,175);
\filldraw[very thin,opacity=.2] (0.0,0.0) rectangle (0.15,175);

\draw (0.27,143) node {$\bullet$} ;
\draw (0.15,143) -- (0.27,143);
\filldraw[very thin,opacity=.2] (0.15,0) rectangle (0.27,143);

\draw (0.40,130) node {$\bullet$} ;
\draw (0.27,130) -- (0.40,130);
\filldraw[very thin,opacity=.2] (0.27,0) rectangle (0.40,130);

\draw (0.7,65) node {$\bullet$} ;
\draw (0.40,65) -- (0.7,65);
\filldraw[very thin,opacity=.2] (0.40,0) rectangle (0.7,65);

\draw (0.9,32) node {$\bullet$} ;
\draw (0.7,32) -- (0.9,32);
\filldraw[very thin,opacity=.2] (0.7,0) rectangle (0.9,32);

\draw (0.9,19) -- (1,19);
\filldraw[very thin,opacity=.2] (0.9,0) rectangle (1,19);
\end{scope}

\begin{scope}[black]
\draw[dashed] (0.55,0.0) -- (0.55,150) node[above]{carbon clearing};
\draw (0.55,0.0) node {$\bullet$} ; 
\draw[thin]  (0.55,-4) node[below] {$\WW$};

\draw[dashed]  (0.40,65)  --  (-0.0,65)  node {$\bullet$} node[left]{$\pcar$};
\end{scope}

\end{tikzpicture}
\caption{Clearing on the allowances market}
\end{center}
\label{clercarb}
\end{figure}
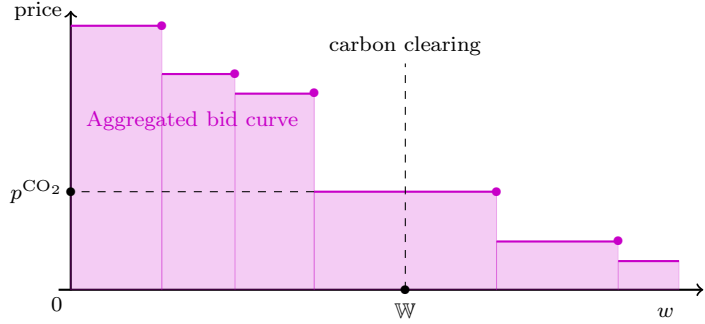
Note that $\pcar(\bfA) =0$ indicates that there are too many allowances. It is worth a reminder here, that the aim of allowances is to decrease emissions. In section \ref{sec:design}, we discuss a design hypothesis (assumption \ref{ass:hypoTW} ) that guarantees an equilibrium price $\pcar(\bfA) >0$.
Therefore, in the following, we assume that the overall quantity $\WW$ of allowances, is such that $\pcar(\bfA) >0$.

By Definition \eqref{reacmarche-prix-quotas}, we have 
${{\bf{\Theta}}}(\bfA ; \pcar(\bfA))\geq \WW ~\mbox{ and }~ {{\bf{\Theta}}}(\bfA ; {\pcar}(\bfA)^{+})\leq \WW.$\\

Producers with  $(\Theta(A_j;\pcar(\bfA))>0$ each obtain the following quantity $\delta_j(\bfA)$ of  allowances 
 \begin{equation}\label{reacmarche-qantite-quotas}
\delta_{j}(\bfA) := \left\{ 
\begin{array}{l}
\Theta(A_{j}; \pcar(\bfA)),\\
\quad \mbox{ if }\Delta^+\Theta(A_j;\pcar(\bfA)) = 0, \\ \\
\Theta(A_{j}; {\pcar(\bfA)^{+}}) + \Delta^+\Theta(A_j;\pcar(\bfA)) \dfrac{\left( \WW -  \Theta(\bfA;{\pcar}(\bfA)^+)\right)}{\Delta^+ \Theta(\bfA;{\pcar}(\bfA))}
,\\
\quad \mbox{ otherwise}
\end{array}\right.
\end{equation}
where $\Delta^+ f(x) := f(x) - f(x^+)$. 

\subsection{Carbon and electricity markets coupling}

As mentioned earlier, for each producer, the marginal cost function is parametrized by the positions 
$\bfA$ of the producers  on the carbon market. Indeed, producer $j$ can obtain  \co2 emission 
allowances on the market to avoid penalization for  (some of) its emissions. Those emissions that are not covered by allowances are penalized at a unit rate $\Pnlt$. 

A profile of an offer to buy from the producers  $\bfA = (A_1, \ldots, A_J)$, through the \co2 market clearing, corresponds to a 
unit price of $\pcar(\bfA)$ of the allowance and quantities $\delta_{j}(\bfA)$ of allowances  bought by each producer (defined by the market rules  
\eqref{reacmarche-prix-quotas},\eqref{reacmarche-qantite-quotas}).

We assume that for all producers the emission rate, $e_{j}$, is constant. Then, the  marginal production cost function $c_{j}^\bfA(\cdot)$, parametrized by the emission regulations, comes out as 
\begin{equation}\label{coutsRegulation}
q\mapsto c_{j}^\bfA(q) = \left \{ 
\begin{array}{ll}
c_j(q) + {e_j} \pcar(\bfA),&  \mbox{ for }  0 < q \leq \displaystyle\frac{\delta_{j}(\bfA)}{e_{j}} \\
c_j(q) + {e_j} \Pnlt, &\mbox{ for }   \displaystyle\frac{ \delta_{j}(\bfA) }{e_{j}} < q \leq \kappa_{j},\\
\end{array}
\right.
\end{equation}
where $c_j(\cdot)$ stands for the marginal production cost without any emission regulation. 

In this coupled market  setting, the strategy of producer  $j$ thus makes a pair $(A_{j}, s_{j})$. The set of admissible strategy profile is defined as 
\begin{align*}
\SSigma = \left \{ (\bfA,\bfs); \;\bfA\in\Ca, \bfs\in \cS \right\}, 
\end{align*}
where in the definition \eqref{classeStratAdmiss}, we use  
\begin{align}\label{def:set-C_j}
\SCM_j = \left\{ c^\bfA_j; \;\bfA \in \Ca \right\}. 
\end{align}
To any strategy profile ${(\bfA,\bfs)} \in \SSigma$, through the market mechanisms described, corresponds  prices for allowances and electricity, $\pcar({(\bfA,\bfs)})$ and $\pelec({(\bfA,\bfs)})$, quantities of allowances bought by each producer, $\delta_{j}({(\bfA,\bfs)})$ and market shares on electricity market $\varphi_{j}({(\bfA,\bfs)})$ of each producer.

\section{Nash Equilibrium}\label{sec:nash}

\subsection{Definition}

We suppose that the $J$ producers behave non cooperatively, aiming at maximizing their individual market share on the electricity market.
For a strategy profile ${(\bfA,\bfs)} \in \SSigma$, the market share of a producer $j$ depends upon its strategy 
$(A_{j},s_{j}(\cdot))$ but also on the strategies $(\bfA_{-j},\bfs_{-j})$ of the other producers
\footnote{Here ${\bf v}_{-j}$ stands for
the profile $(v_{i},\cdots, v_{j-1},v_{j+1},\cdots, v_{J})$.}. 
In this set-up the natural solution is the Nash equilibrium (see e.g. \cite{basar-olsder-98}).
More precisely we are looking for a strategy profile 
\[
{(\bfA^{\ast},\bfs^{\ast})} = ( (A_{1}^{\ast},s_{1}^{\ast}), \cdots, (A_{J}^{\ast},s_{J}^{\ast}) ) \in \SSigma
\]
that 
satisfies Nash equilibrium conditions:  none of the 
producers would strictly benefit, that is, would strictly increase its market share from a unilateral deviation. Namely, for any producer $j$ strategy ${(\bfA_{j},\bfs_{j})}$ such that 
$({(\bfA^{\ast}_{-j},\bfs^{\ast}_{-j})}; {(A_j,s_j)}) \in \SSigma $, we have
\footnote{$({\bf v}_{-j} ; v)$ stands for $(v_{1},\cdots v_{j-1}, v, v_{j+1}, \cdots v_{J})$}

\begin{align}\label{NashGlob}
\varphi_{j}({{(\bfA^{\ast},\bfs^{\ast})}}) \geq  \varphi_{j}({{(\bfA^{\ast}_{-j},\bfs^{\ast}_{-j})}};{(A_j,s_j)}), 
\end{align}
where $q_j$ is the quantity of electricity sold. Note that the dependency in terms of $\bfA$ through the marginal cost $c_j^\bfA$ is now explicit in $q_j$. 

Condition \eqref{NashGlob} has to be satisfied for any unilateral deviation of any producer $j$. In particular  \eqref{NashGlob}  has to be satisfied for a producer $j$ 
admissible deviation $(A_{j}^{\ast}, s_{j})$ such that  $({(\bfA^{\ast}_{-j},\bfs^{\ast}_{-j})}; {(A_j^{\ast},s_j)}) \in \SSigma $ of producer $j$ that would change its behavior only on the electricity market. 
Consequently, Nash equilibrium for the electricity  component component $\bfs ^{\ast}$ of the Nash equilibrium is also a Nash equilibrium for 
a game where producers only behave on an electricity market with
marginal production costs $c_{j}^{\bfA^{\ast}}(\cdot)$, $j=1, \cdots J$.

The Nash equilibrium for a game restricted to the electricity
market characterizes the $\bfs^{\ast}$ component for the 
coupled market game equilibrium.

Note that, if $\bfs^{\ast}$ is the  producers behavior on the electricity 
market at the Nash equilibrium, any behavior $\bfA$  on the \co2 market, 
such that the strategy profile $(\bfA,\bfs^{\ast})$ is admissible yields to the 
same market share for each producer. 

Next section focuses on determining a Nash equilibrium on the game restricted to the electricity market.

\subsection{Equilibrium on Power market}

In this restricted set-up, we consider that the marginal costs   $\{c_j,j=1\ldots,J\}$ are known data, possibly fixed through the position $\bfA$ on the \co2 market. In this section, we refer to $\cS$ as the set  of admissible strategy profiles,in the particular case where $\SCM_j=\{c_j\}$ for each $j=1,\ldots,J$. 

The Nash equilibrium problem is as follows : find a strategy profile $\bfsa = (\sa_{1}, \ldots, \sa_{J}) \in \cS$  such that 
\begin{align}\label{nashQuantiteElec}
\begin{aligned}
\forall j, \forall \;s_{j}\neq \sa_{j}, \quad \varphi_{j}(\bfsa) \geq \varphi_{j}(\bfsa_{-j}; s_{j}).
\end{aligned}
\end{align}

The following proposition exhibits a Nash equilibrium, whereby each producer must choose the strategy denoted by $C_{j}$, and referred to as {\it marginal production cost strategy}. It is defined by 
\begin{align}\label{stratCoutMarg}
C_{j}(q) = \left \{ 
 \begin{array}{l}
   c_{j}(q), \mbox{ for } q \in \dom(c_{j}) \\
   \plol , \mbox{ for } q \not \in \dom(c_{j}) .
 \end{array} \right.
\end{align}
\begin{proposition}\label{propo-Nash}
\item[(i)] 
For any strategy profile ${\bfs} = (s_1,\ldots,s_J)$, no producer  $j\in\{1,\ldots,J\}$ can
be penalized by deviating from strategy  $s_j$ to is marginal production cost strategy $C_j$, namely, 
\begin{equation}
\label{propoPti}
\varphi_j(\bfs)\leq \varphi_({\bf{s}}_{-j};C_j).
\end{equation}
In other words, $C_{j}$ is a dominant strategy for any producer $j$.
\item[(ii)] The strategy profile $\bfC =(C_{1},\dots C_{J})$ is a Nash equilibrium.
\item[(iii)] If the strategy profile $\bfs \in \cS$ is a Nash equilibrium, then we have 
$\pelec(\bfs) = \pelec(\bfC)$ and for any producer $j$, $\varphi_{j}(\bfs) = \varphi_{j}(\bfC)$.
\end{proposition}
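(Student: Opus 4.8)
The plan is to establish the three parts of Proposition~\ref{propo-Nash} in the order they are stated, since (ii) will follow almost immediately from (i), and (iii) will use the monotonicity machinery set up for (i).

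\textbf{Part (i).} First I would fix a producer $j$ and an arbitrary strategy profile $\bfs$, and compare $\Of(s_j;\cdot)$ with $\Of(C_j;\cdot)$. Since any admissible $s_j$ satisfies $s_j(q)\geq c_j(q)=C_j(q)$ on $\dom(c_j)$ (and they agree off the domain by the loss-of-load convention), Remark~\ref{property:offre croissante}(ii) gives $\Of(C_j;p)\geq \Of(s_j;p)$ for every price $p$. Summing over the other producers, who keep their strategies fixed, yields $\Off(\bfs_{-j};C_j;p)\geq \Off(\bfs;p)$ pointwise. The key consequence is that the deviation only \emph{lowers} the threshold price: $\uup(\bfs_{-j};C_j)\leq \uup(\bfs)$, and by the consistency rule \eqref{regleChoixPrix}, $\pelec(\bfs_{-j};C_j)\leq \pelec(\bfs)$ as well. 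Now I would argue that at the lower (or equal) clearing price producer $j$ sells at least as much. This requires a short case analysis following the definition \eqref{reacmarche-qantite-elec} of $\varphi_j$: in the non-rationed case the claim is immediate from $\Of(C_j;p)\geq \Of(s_j;p)$ together with the fact that lowering the price can only increase demand $D$ (Assumption~\ref{hypo:demande}); in the rationed case one must check that the proportional-sharing formula, applied with $C_j$ in place of $s_j$, still delivers a quantity at least $\varphi_j(\bfs)$ — here the point is that when $j$ switches to its marginal-cost strategy it (weakly) increases both the "accepted in full" part $\Of(\cdot;\uup^-)$ and the discontinuity jump $\Delta^-\Of(\cdot;\pelec)$, and the aggregate never decreases, so $j$'s share of the residual demand does not shrink. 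I expect this case analysis to be the main obstacle: one has to be careful that the clearing price can genuinely drop and that the bookkeeping in \eqref{reacmarche-qantite-elec} behaves monotonically under this specific deviation.

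\textbf{Part (ii).} Given (i), the strategy $C_j$ is a dominant strategy for every $j$: whatever the others do, producer $j$ cannot do strictly better than playing $C_j$. A profile in which every player uses a dominant strategy is automatically a Nash equilibrium — apply \eqref{propoPti} with $\bfs=\bfC_{-j}$ augmented by an arbitrary deviation $s_j$ to get $\varphi_j(\bfC_{-j};s_j)\leq \varphi_j(\bfC)$, which is exactly \eqref{nashQuantiteElec}. So this part is one line.

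\textbf{Part (iii).} Let $\bfs$ be any Nash equilibrium. Applying (i) to each producer $j$ with the profile $\bfs$, no producer is penalized by switching to $C_j$, so $\varphi_j(\bfs)\leq \varphi_j(\bfs_{-j};C_j)$; but the Nash property of $\bfs$ forces the reverse inequality, hence $\varphi_j(\bfs)=\varphi_j(\bfs_{-j};C_j)$ for every $j$. The plan is then to iterate: replace the players one at a time by their marginal-cost strategy, showing at each step that the market share of the player being switched is unchanged, and — using the monotonicity of $\Off$ and the price-consistency rule \eqref{regleChoixPrix} — that $\uup$ (hence $\pelec$) cannot move, because if it strictly dropped at some switch then by the strict version of \eqref{regleChoixPrix} and the structure of \eqref{reacmarche-qantite-elec} the switching producer's quantity would strictly increase, contradicting the equality just obtained. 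After $J$ such steps we have transformed $\bfs$ into $\bfC$ without changing $\uup$, therefore $\pelec(\bfs)=\pelec(\bfC)$; and since at the common clearing price each $\varphi_j$ is determined by \eqref{reacmarche-qantite-elec} from the asks, and the total sold is $D(\pelec)\wedge\Off$, one concludes $\varphi_j(\bfs)=\varphi_j(\bfC)$ for all $j$. The delicate point here is ensuring the one-at-a-time replacement argument is valid — that switching producer $i$ to $C_i$ does not disturb the already-established equalities for producers switched earlier — which again reduces to the monotonicity of the aggregate supply and the rigidity of the clearing price under these moves.
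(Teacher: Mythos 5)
The paper does not actually prove (i) or (iii) in this note: it observes that (ii) is a direct consequence of the dominance in (i) and refers to \cite{preprint-BMP} for the rest. So I can only judge your plan against the framework the paper sets up. Your part (ii) coincides exactly with the paper's one-line argument, and the skeleton of your part (i) --- use the no-loss constraint and Remark \ref{property:offre croissante}(ii) to get $\Of(C_j;p)\geq\Of(s_j;p)$ for all $p$, deduce $\uup(\bfs_{-j};C_j)\leq\uup(\bfs)$ and hence $\pelec(\bfs_{-j};C_j)\leq\pelec(\bfs)$ via \eqref{regleChoixPrix} --- is certainly the right starting point. But the plan has two genuine gaps, both located exactly at the points you flag as ``delicate'' and then do not close.

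In part (i), the non-rationed case is \emph{not} immediate when the clearing price strictly drops. You must compare $\Of(C_j;\pelec(\bfs_{-j};C_j))$ with $\varphi_j(\bfs)\leq\Of(s_j;\pelec(\bfs))$, and the two monotonicities you invoke pull in opposite directions: $\Of(C_j;\cdot)\geq\Of(s_j;\cdot)$ pointwise, but $\Of(C_j;\cdot)$ is increasing in $p$ and is now evaluated at a \emph{lower} price. The missing ingredient is the definition of $\uup$: for $p$ just above $\uup(\bfs_{-j};C_j)$ one has $\Off((\bfs_{-j};C_j);p)>D(p)$, hence $\Of(C_j;p)>D(p)-\sum_{k\neq j}\Of(s_k;p)$, and since such $p$ lies below $\uup(\bfs)$ the other producers' asks there are no larger than their awards under $\bfs$ while $D(p)\geq D(\pelec(\bfs))$. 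Some lower bound of this kind on producer $j$'s new award is indispensable; without it the case analysis does not conclude. (Your monotonicity claim for the rationing formula is correct when the price does not move --- one can check that the accepted fraction of the jump can only decrease --- but again the price-drop case needs separate treatment.)

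In part (iii), the one-at-a-time replacement is not justified as written. The Nash property of $\bfs$ only controls unilateral deviations \emph{from} $\bfs$; once producer $1$ has been switched to $C_1$, the intermediate profile is not known to be an equilibrium, so the equality $\varphi_2=\varphi_2(\cdot_{-2};C_2)$ cannot be reapplied there. Your proposed repair --- that $\uup$ cannot strictly drop at any switch because a strict drop would force a strict increase of the switching producer's share, contradicting the equality already obtained --- is precisely the claim that requires proof, and it is not obvious: a producer whose ask is already fully served (e.g.\ at capacity) can see the price drop without its quantity increasing. You would need either to prove this rigidity directly from \eqref{reacmarche-qantite-elec} and Assumption \ref{hypo:demande}, or to replace the chain of intermediate profiles by an argument comparing $\bfs$ and $\bfC$ globally.
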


Point (ii) of the previous proposition is a direct consequence of the dominance property (i).
The proof of both (i) and (iii) can be found in \cite{preprint-BMP}. Point (ii) of the proposition
exhibits a Nash equilibrium strategy profile. Clearly this equilibrium is not  unique since we can easily 
show  that a producer's given supply can follow from countless  different strategies. 
Nevertheless point (iii) shows that for any Nash equilibrium, the associated electricity prices are the same and the 
quantity of electricity bought by any producer $j$ are the same for
all equilibrium profiles.

\subsection{Coupled markets design through Nash equilibrium}\label{sec:design}

From this point we restrict our attention to a particular design of the market. In the following, the scope of the analysis applies to a special class of producers, a specific electricity market price clearing (satisfying Definition \ref{def:clearingElec}), a range of quantities of allowances available on the \co2 market. Although not necessary, the following  restriction brings simplifications to the development. 
\begin{ass}{\bf On the producers.}\label{ass:producers}
Each producer $j$ operates a single production unit (with emission rate $e_{j}$), for which  
\begin{itemize}
\item[(i)] The marginal production cost is as in Equation (\ref{coutsRegulation}), where the
contribution that does not depend on the producer positions $\bfA$ in the \co2 market is constant,
$c_{j}(q) = c_{j}\index{q \in [0,\kappa_{j}]}$.
\item[(ii)] The producers are two by two different~: $\forall i, j \in \{1, \cdots J \}, (c_{i},e_{i}) \neq (c_{j},e_{j})$.
\end{itemize}
\end{ass}
For a given  strategy profile on the electricity market, Definition \ref{def:clearingElec} gives a range of possible determination for the electricity price. 
Previously, the analysis of the  Nash Equilibrium restricted to the  electricity market, did not require a precise clearing price determination. 
Nevertheless to extend our analysis of the coupling we need to explicit this determination and assume the following~:
\begin{ass}{\bf On the market electricity}\label{ass:ElecClearingPrice}
For a given strategy profile $\bfs$ of the producers, the clearing price of electricity is
$p(\bfs) = \op(\bfs)$, where $\op(\bfs)$ is  defined in Definition \ref{def:clearingElec} by Equation (\ref{reacmarche-prix}).
\end{ass}
As previously noted, this choice of electricity price ensures that  for any strategy profile $\bfs$, for any positive $\epsilon$ we have 
$D(p(\bfs)+\epsilon) < D(p(\bfs))$, for any strategy profile $\bfs$. This property is necessary for the  main theorem \ref{propo:Nashcoupled} proof. 

The quantity $\WW$ of \co2 allowances available on the market plays a crucial role in the market 
design.
As a matter of fact, if this quantity is too large, its price on market will drop to zero, leaving 
the market incapable of fulfilling its role of decreasing \co2 emissions. Therefore we clearly
need to make an assumption that restricts the number of allowances available. Capping the maximum 
quantity of allowances available requires information about producers willing to obtain
allowances. This is the objective of the following paragraph where we define a {\it willing to buy} function that plays a central part on the construction of the Nash equilibrium.  

\subsection*{Willing to buy functions}
In this paragraph, we aim at guessing a Nash equilibrium candidate. We base our reasoning on the dominant strategy on the electricity market alone (see Proposition\eqref{propo-Nash}). For a while, we consider an exogenous  \co2 cost $\tau$. The producers marginal cost become for any $\tau \in [0,\Pnlt]$, $c_{j}^{\tau}, \; j=1,\cdots J$,  
$c_{j}^{\tau}(q) = c_{j} + \tau e_{j}, \;\; q \in [0,\kappa_{j}]$.  In this framework, the dominant strategy is also parametrized by $\tau$ as $C^{\tau}_j(\cdot)$ defined as in \eqref{stratCoutMarg}. In the same way, we define the clearing electricity price and quantities in terms of $\tau$ only by 
\begin{align*}
\pelec(\tau) &= \pelec (\{C_j^\tau(\cdot),j=1\ldots,J\})\\
\varphi_j(\tau) &= \varphi_j(\{C_j^\tau(\cdot),j=1\ldots,J\}).
\end{align*}
We determine two {\it willing-to-buy-allowances functions}  $\cW(\cdot)$ and $\ocW(\cdot)$, following a Dutch auction mechanism-like as follows : 
\begin{align}\label{WillingQuota}
&\cW (\tau)  = \sum_{j=1}^J e_{j} \varphi_j(\tau) 
\quad\mbox{ and } \quad
\ocW(\tau) = \sum_{j=1}^J e_{j} \kappa_{j}  
\ind_{\{\varphi_j(\tau)>0\}} 
\end{align}

Given the \co2 cost $\tau$, the amount $\cW (\tau)$ represent the allowances needed to cover the global emissions generated by  the players who won electricity market shares on the electricity market.
$\ocW(\tau)$ represent the allowances needed in the case producers whish to cover their overall production capacity $\kappa_{j}$.  Obviously we have 
$\cW (\tau) \leq \ocW(\tau)$. We now can state our last design assumption,

\begin{ass}{\bf On carbon market design.}\label{ass:hypoTW}
The number $\WW$ of the allowances available on the auction \co2 market satisfies
\[\cW(0) > \WW > \ocW(\Pnlt).\]
\end{ass}

\begin{proposition}\label{pelecCroissante} 
As functions of $\tau$,  $\pelec(\cdot)$  and  $\sum_{j} q_{j}(\cdot)$ are respectively increasing and decreasing.
\end{proposition}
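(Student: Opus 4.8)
The plan is to read the clearing outcomes off the explicit merit‑order equilibrium $\bfC^\tau=(C_1^\tau,\dots,C_J^\tau)$ of \eqref{stratCoutMarg} — which is exactly what $\pelec(\tau)$ and $\varphi_j(\tau)$ refer to, cf.\ Proposition \ref{propo-Nash} — and to exploit one structural fact: the aggregated offer at a fixed price is non‑increasing in the carbon cost $\tau$. Write $\uup(\tau),\op(\tau),\Off(\tau;\cdot)$ for the corresponding objects of Definition \ref{def:clearingElec}. First I would record that $C_j^\tau(q)=c_j(q)+\tau e_j$ on $[0,\kappa_j]$, so $\Of(C_j^\tau;p)=\sup\{q\in[0,\kappa_j]:c_j(q)\le p-\tau e_j\}$; since $e_j\ge 0$, for $\tau\le\tau'$ one has $p-\tau' e_j\le p-\tau e_j$, hence $\Of(C_j^{\tau'};p)\le\Of(C_j^{\tau};p)$ and therefore $\Off(\tau';p)\le\Off(\tau;p)$ for every $p$ (under Assumption \ref{ass:producers} simply $\Off(\tau;p)=\sum_{j:\,c_j+\tau e_j\le p}\kappa_j$). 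I would also note, as in Remark \ref{property:offre croissante} and Assumption \ref{hypo:demande}, that $p\mapsto\Off(\tau;p)-D(p)$ is non‑decreasing, so the set $\{p:\Off(\tau;p)>D(p)\}$ of \eqref{reacmarche-prix} is an up‑interval $(\uup(\tau),\infty)$ or $[\uup(\tau),\infty)$; in particular $\Off(\tau;p)>D(p)$ for $p>\uup(\tau)$ and $\Off(\tau;p)\le D(p)$ for $p<\uup(\tau)$.

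Next I would prove $\pelec(\cdot)$ non‑decreasing. From the monotonicity of $\Off(\cdot;p)$, $\tau\le\tau'$ gives $\{p:\Off(\tau';p)>D(p)\}\subseteq\{p:\Off(\tau;p)>D(p)\}$, hence $\uup(\tau)\le\uup(\tau')$ (with the convention $\inf\emptyset=\plol$, recalling $\uup(\cdot)\le\plol$). By Assumption \ref{ass:ElecClearingPrice}, $\pelec(\tau)=\op(\tau)=\Psi(\uup(\tau))$ with $\Psi(u):=\sup\{p\in[u,\plol]:D(p)=D(u)\}$, and $\Psi$ is non‑decreasing: if $u\le u'$ and $D(u)=D(u')$ then $D$ is constant on $[u,u']$, so $u,u'$ lie in the same level set $L$ of $D$ (which is right‑closed by left‑continuity of $D$) and $\Psi(u)=\Psi(u')=(\sup L)\wedge\plol$; whereas if $D(u)>D(u')$ then $u'$ lies strictly to the right of $L$, so $\Psi(u)=(\sup L)\wedge\plol<u'\le\Psi(u')$. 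Hence $\pelec(\cdot)$ is non‑decreasing, so $D(\pelec(\cdot))$ is non‑increasing; moreover $D(\pelec(\tau))=D(\op(\tau))=D(\uup(\tau))$ since $\op(\tau)\in L$.

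For the quantity $\sum_j q_j(\tau)=\sum_j\varphi_j(\tau)=D(\pelec(\tau))\wedge\Off(\tau;\pelec(\tau))$ (from \eqref{reacmarche-qantite-elec}) I would first observe: if $\op(\tau)>\uup(\tau)$ then $\op(\tau)$ belongs to the up‑interval above, so $\Off(\tau;\op(\tau))>D(\op(\tau))=D(\uup(\tau))$ and the minimum equals $D(\uup(\tau))$; if $\op(\tau)=\uup(\tau)=:u$ then it equals $D(u)\wedge\Off(\tau;u)$, and right‑continuity of $\Off(\tau;\cdot)$ (Remark \ref{property:offre croissante}) with $\Off(\tau;p)>D(p)$ for $p>u$ yields the lower bound $\Off(\tau;u)\ge D(u^+)$ (where $D(u^+):=\lim_{p\downarrow u}D(p)$). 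To conclude, fix $\tau\le\tau'$ and put $Q(\tau):=\sum_j\varphi_j(\tau)$. If $Q(\tau)=D(\uup(\tau))$ then $Q(\tau')\le D(\pelec(\tau'))\le D(\pelec(\tau))=D(\uup(\tau))=Q(\tau)$ by the previous step. Otherwise $\op(\tau)=\uup(\tau)=:u$, $Q(\tau)=\Off(\tau;u)<D(u)$ and $\Off(\tau;u)\ge D(u^+)$: if $\uup(\tau')=u$ then $\pelec(\tau')=\Psi(u)=u$ and $Q(\tau')=D(u)\wedge\Off(\tau';u)=\Off(\tau';u)\le\Off(\tau;u)=Q(\tau)$ (offer monotone in $\tau$); if $\uup(\tau')>u$ then $\pelec(\tau')\ge\uup(\tau')>u$, hence $D(\pelec(\tau'))\le D(u^+)$ and $Q(\tau')\le D(\pelec(\tau'))\le D(u^+)\le\Off(\tau;u)=Q(\tau)$. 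Thus $\sum_j q_j(\cdot)$ is non‑increasing.

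The hard part is this last step. The term $D(\pelec(\cdot))$ is trivially non‑increasing, but $\sum_j\varphi_j(\tau)$ is the \emph{minimum} of $D(\pelec(\tau))$ with $\Off(\tau;\pelec(\tau))$, and in the latter the parameter $\tau$ pushes the offer curve down while pushing the clearing price up, so no term‑by‑term monotonicity is available and $\Off(\tau;\pelec(\tau))$ is genuinely non‑monotone. The resolution is to see that $\Off(\tau;\pelec(\tau))$ can be the active (smaller) term only in the degenerate configuration where $D$ jumps down immediately to the right of $\uup(\tau)$ and the offer merely reaches the post‑jump level $D(\uup(\tau)^+)$; the inequality $\Off(\tau;\uup(\tau))\ge D(\uup(\tau)^+)$ is then exactly what controls $Q(\tau')$ when $\uup$ jumps, and Assumption \ref{ass:ElecClearingPrice} (i.e.\ $\pelec=\op$) is what keeps $\pelec$ honest on plateaus of $D$. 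One minor point to watch in the write‑up is that the ask‑size functions must be read so as to be right‑continuous, as asserted in Remark \ref{property:offre croissante}, since that right‑continuity is used at prices equal to a producer's marginal cost.
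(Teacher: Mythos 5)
Your proposal is correct and follows exactly the route the paper indicates — the paper's entire proof is the one-line remark that the statement "is a consequence of Remark \ref{property:offre croissante}, through cost parameter $\tau$", and your first step (monotonicity of $\tau\mapsto\Of(C_j^\tau;p)$ via Remark \ref{property:offre croissante}(ii), hence of $\uup$ and $\pelec$) is precisely that, with the remaining case analysis for $\sum_j q_j(\cdot)=D(\pelec(\cdot))\wedge\Off(\cdot\,;\pelec(\cdot))$ supplying details the paper omits. The only nit is that $\Of$ is defined with a strict inequality, so $\Of(C_j^\tau;p)=\sup\{q:c_j(q)<p-\tau e_j\}$ rather than $\le$, which changes nothing in the monotonicity argument.
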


This proposition is a consequence of Remark \ref{property:offre croissante}, through cost parameter $\tau$. 

Assumption \ref{ass:hypoTW} allows us to define two prices of particular interest  for the construction of the equilibrium strategy:
\begin{align}\label{eq:prix_ante_carbon} 
& \pguess = \sup \{\tau\in[0,\Pnlt]~\text{s.t.}~\cW(\tau) > \WW \} \; \mbox{ and } \;\;
\opguess = \sup \{\tau\in[0,\Pnlt]~\text{s.t.}~\ocW(\tau) > \WW\}.
\end{align}
\begin{lemma}\label{lem:couplage-continuite}
\item{(i)} We have  $\cW(\pguess) = \WW$. 
\item{(ii)} $\ocW$ is a staircase  function  valued in the finite set  $\{ \sum_{i\in {\mathcal I}  } \kappa_{j} e_{j}; {\cI}\subset \{ 1,\cdots J\}\}$.
\end{lemma}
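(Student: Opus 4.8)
\emph{Proof plan.}

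The plan is to treat both parts by analysing how the electricity equilibrium of Proposition~\ref{propo-Nash} depends on the scalar parameter $\tau$, using that under Assumption~\ref{ass:producers}(i) the regulated marginal costs $q\mapsto c_j+\tau e_j$ depend \emph{affinely}, hence continuously, on $\tau$, while the demand $D$ is fixed.

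For part (i), I would first record that $\pguess$ is well defined: Assumption~\ref{ass:hypoTW} gives $\cW(0)>\WW$, so the set in \eqref{eq:prix_ante_carbon} is nonempty, and being bounded it has a supremum in $[0,\Pnlt]$. The core step would be to prove that
\[
\tau\longmapsto \cW(\tau)=\sum_{j=1}^J e_j\,\varphi_j(\tau)\quad\text{is continuous on }[0,\Pnlt].
\]
Granting this, the conclusion is a short soft argument: since $\cW(\Pnlt)\le\ocW(\Pnlt)<\WW$, continuity gives some $\delta>0$ with $\cW<\WW$ on $(\Pnlt-\delta,\Pnlt]$, so $\pguess<\Pnlt$; then $\cW(\pguess)>\WW$ is impossible, since continuity would force $\cW>\WW$ at some $\tau\in(\pguess,\Pnlt]$, contradicting the definition of $\pguess$ as a supremum; and $\cW(\pguess)<\WW$ is impossible, since continuity would force $\cW<\WW$ on $(\pguess-\delta',\pguess)$, so the supremum would be strictly below $\pguess$. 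Hence $\cW(\pguess)=\WW$.

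The continuity of $\cW$ is the step I expect to be the real work. I would argue at the level of the market clearing. For fixed $\tau$ the aggregate offer $p\mapsto\Off(\{C_j^\tau\};p)$, equivalently the increasing staircase $q\mapsto S(\tau;q)$ giving the regulated marginal cost of the $q$-th unit in merit order, depends continuously on $\tau$, since each breakpoint value $c_i+\tau e_i$ does and $S(\tau;q)$ is built from these by $\min/\max$ operations. Intersecting this $\tau$-continuous increasing staircase with the fixed demand curve and invoking Remark~\ref{property:offre croissante}, Proposition~\ref{pelecCroissante} (so $\tau\mapsto\pelec(\tau)$ is nondecreasing and piecewise affine, following either a cost line $c_i+\tau e_i$ or a fixed price level), and Assumption~\ref{ass:ElecClearingPrice} (so $\pelec(\tau)=\op(\tau)$ sits at the right edge of the demand plateau it meets), I would show that $\pelec(\tau)$ and each share $\varphi_j(\tau)$ defined in \eqref{reacmarche-qantite-elec} vary continuously with $\tau$. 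This has to be checked with care at the finitely many exceptional values of $\tau$: those at which two cost lines $c_i+\tau e_i$, $c_{i'}+\tau e_{i'}$ coincide --- finitely many, because by Assumption~\ref{ass:producers}(ii) the pairs $(c_i,e_i)$ are pairwise distinct --- and those at which $\pelec(\tau)$ reaches an endpoint of a demand plateau or a jump of $D$, where one compares one-sided limits directly from \eqref{reacmarche-prix}--\eqref{reacmarche-qantite-elec}. \textbf{The hard part} is precisely this bookkeeping: showing that the rationing rule \eqref{reacmarche-qantite-elec} produces no jump in any $\varphi_j$ as the offer breakpoints slide, at $\tau$-dependent speeds, across the (possibly flat, possibly discontinuous) demand curve.

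For part (ii), fix $\tau$ and set $\mathcal I(\tau)=\{\,j\in\{1,\dots,J\}:\varphi_j(\tau)>0\,\}$. By \eqref{WillingQuota}, $\ocW(\tau)=\sum_{j\in\mathcal I(\tau)}e_j\kappa_j$, so $\ocW$ takes its values in the finite set $\{\sum_{j\in\mathcal I}e_j\kappa_j:\mathcal I\subseteq\{1,\dots,J\}\}$, which gives the range statement. For the staircase property, I would observe that $\mathcal I(\tau)$ can change only when the combinatorial type of the clearing changes, which --- by the discussion above --- happens only at the same finitely many breakpoints of $\tau$ (cost lines crossing one another, a cost line meeting the fixed price level at which $D$ equals a cumulative capacity, a plateau endpoint or jump of $D$ reached by $\pelec(\tau)$). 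Between two consecutive such breakpoints, $\mathcal I(\tau)$ and hence $\ocW(\tau)$ is constant, so $\ocW$ is a staircase function.
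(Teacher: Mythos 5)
The paper itself never proves this lemma (only Proposition \ref{propo-Nash} and Theorem \ref{propo:Nashcoupled} are deferred to the companion preprint), so your proposal can only be judged on its own terms. Part (ii) is essentially fine: the range statement is immediate from \eqref{WillingQuota}, and the staircase property does follow, though not quite by your breakpoint count --- $D$ may have infinitely many plateaus and jumps, so "plateau endpoints of $D$ reached by $\pelec(\tau)$" is not a finite list. The clean repair is to note that between two consecutive crossings of the cost lines the merit order is fixed and the served quantity is decreasing in $\tau$ (Proposition \ref{pelecCroissante}), so on each such interval $\cI(\tau)$ is monotone for inclusion and changes at most $J$ times.

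The genuine gap is in part (i), at exactly the step you defer as "the hard part": under the stated assumptions $\cW$ is \emph{not} continuous, so the intermediate-value argument cannot be completed. The obstruction is the one you name, but it is not bookkeeping --- it really produces a jump. If two regulated cost lines $c_i+\tau e_i$ and $c_{i'}+\tau e_{i'}$ cross at a value $\tau_0$ at which both producers sit at the margin, the roles of inframarginal and marginal producer swap across $\tau_0$; the rationing rule \eqref{reacmarche-qantite-elec} at the tie does not interpolate the one-sided limits, so $\varphi_i$ and $\varphi_{i'}$ each jump. Their sum is continuous, but two distinct crossing lines necessarily have $e_i\neq e_{i'}$, so the weighted sum $e_i\varphi_i+e_{i'}\varphi_{i'}$, hence $\cW=\sum_j e_j\varphi_j$, jumps. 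Concretely, take three producers with $(c_j,e_j,\kappa_j)$ equal to $(0,\,0.01,\,10)$, $(1,\,1,\,10)$, $(0,\,2,\,10)$ and $D(p)=(30-p)^{+}$: one computes $\cW(\tau)=29.1-\tau$ for $\tau<1$, $\cW(1)=27.1$, and $\cW(\tau)=30.1-4\tau$ for $\tau$ slightly above $1$, so $\cW(1^-)=28.1>\cW(1)=27.1$. For any $\WW\in(27.1,\,28.1)$ (compatible with Assumption \ref{ass:hypoTW} for, say, $\Pnlt=15$) one gets $\pguess=1$ while $\cW(\pguess)=27.1\neq\WW$. So statement (i) itself needs either a genericity hypothesis excluding a merit-order crossing at the margin at $\tau=\pguess$, or a modified tie-breaking convention; in any case the continuity of $\cW$ that your plan rests on is not provable from Assumptions \ref{ass:producers}--\ref{ass:hypoTW} alone, and your soft IVT argument collapses precisely when $\WW$ falls in a jump gap of $\cW$.
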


Let us define $\cI(\tau) := \{j;\;q_j(\tau) \neq 0\}$. 
\begin{proposition}
At $\opguess$, only one of the following two cases may occur~:
\item{\bf Case A.} There exists a unique producer, denoted $\bar{i}$ such that 
$\cI(\opguess) = \cI({\opguess}^+) \cup \{\bar{i}\}$. 
For  $\bar{i}$ we have
$\frac{1}{e_{\bar{i}}}\left(\pelec(\opguess)  -c_{\bar{i}} \right) = \opguess$ 
and 
$
\frac{1}{e_{\bar{i}}}\left(\pelec(\pguess) - c_{\bar{i}} \right) = \pguess.
$

\item{\bf Case B.} There exists two  producers, denoted $i_{l}$ and $i_{r}$ such that

\[
\cI(\opguess) = \cI({\opguess}^+) \cap \cI(\opguess)\cup \{i_l\}\quad \mbox{and} \quad \cI({\opguess}^+) = \cI({\opguess}^+) \cap \cI(\opguess)\cup \{i_r\}.
\]
\end{proposition}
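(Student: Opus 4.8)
The plan is to follow the active set $\cI(\tau)=\{j:\varphi_j(\tau)>0\}$ as the exogenous carbon cost $\tau$ runs over $[0,\Pnlt]$, using that at cost $\tau$ the market outcome is the one generated by the dominant-strategy profile $(C_1^\tau,\dots,C_J^\tau)$ of Proposition \ref{propo-Nash}. For that profile the aggregate offer is the staircase with a jump of height $\kappa_j$ at the merit-order price $c_j+\tau e_j$, so the clearing retains the producers of smallest $c_j+\tau e_j$ until their cumulated capacity meets the demand: all active producers but one --- the \emph{marginal} one $m(\tau)$, which posts the clearing price $\pelec(\tau)=c_{m(\tau)}+\tau\,e_{m(\tau)}$ --- sell their full capacity, $m(\tau)$ sells $\varphi_{m(\tau)}(\tau)\in(0,\kappa_{m(\tau)}]$, and inactive producers sell nothing. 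Hence $\ocW(\tau)=\sum_{j\in\cI(\tau)}e_j\kappa_j$ depends on $\tau$ only through $\cI(\tau)$, while $\ocW(\tau)-\cW(\tau)=e_{m(\tau)}(\kappa_{m(\tau)}-\varphi_{m(\tau)}(\tau))$ involves $\tau$ only through $m(\tau)$. As $\tau$ increases, $\cI(\tau)$ can change only in two ways: an \emph{exit}, where $\varphi_{m(\tau)}(\tau)$ reaches $0$ and the next producer in the merit order becomes marginal (exactly one producer leaves $\cI$); or a \emph{swap}, where the marginal producer $m$ and the cheapest inactive producer $n$ satisfy $c_m+\tau e_m=c_n+\tau e_n$ with $e_m>e_n$, so that just past this value $n$ enters and $m$ leaves $\cI$. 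No other change is possible: only the marginal producer's quantity can drop to $0$ (ruling out a double exit, which would need a zero-capacity producer), and a coincidence of three or more merit-order prices at the active/inactive boundary is excluded by Assumption \ref{ass:producers}(ii) together with the strict post-clearing decrease of $D$ from Assumption \ref{ass:ElecClearingPrice}. Each transition makes $\ocW$ jump.

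Next I would locate $\opguess$ and obtain the dichotomy. From Assumption \ref{ass:hypoTW} and $\cW\le\ocW$, $\ocW(0)\ge\cW(0)>\WW>\ocW(\Pnlt)$, and since $\ocW$ is a staircase (Lemma \ref{lem:couplage-continuite}(ii)) its definition as a supremum forces $\ocW>\WW$ just left of $\opguess$ and $\ocW\le\WW$ just right of it; so $\opguess$ is a jump point of $\ocW$, i.e.\ a transition point of $\cI$ at which $\ocW$ drops. A pure entry is then impossible, so by the previous paragraph the transition is either an exit --- \textbf{Case A}, with $\bi$ the unique producer marginal at $\opguess$ and $\cI(\opguess)=\cI({\opguess}^+)\cup\{\bi\}$ --- or a swap, and taking $i_l:=m$ (leaving) and $i_r:=n$ (entering) yields exactly the two displayed identities of \textbf{Case B}; the two cases are mutually exclusive (no producer enters in Case A). The only care needed is to identify which of $\cI({\opguess}^-),\cI(\opguess),\cI({\opguess}^+)$ lies on each side of the jump, via the left-/right-continuity of the offers in $p$ and of $D$, the strict decrease of $D$ past the clearing removing the ambiguity.

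In Case A it remains to establish the price equations for $\bi$. The first, $\frac1{e_{\bi}}(\pelec(\opguess)-c_{\bi})=\opguess$, is immediate since $\bi$ is marginal at $\opguess$ and marginal producers post the clearing price. For the second, $\frac1{e_{\bi}}(\pelec(\pguess)-c_{\bi})=\pguess$, I would show that the marginal producer is constant on $[\pguess,\opguess]$: the inputs are $\cW(\pguess)=\WW$ (Lemma \ref{lem:couplage-continuite}(i)), the monotonicity of $\pelec(\cdot)$ (Proposition \ref{pelecCroissante}) and the non-increase of $\cW(\cdot)$ (an exit does not raise $\cW$, a swap lowers it because the entering producer has the smaller emission rate), and the identity $\ocW-\cW=e_{m}(\kappa_{m}-\varphi_{m})$; on $[\pguess,\opguess]$ one has $\cW\le\WW\le\ocW$, and together with $\cW(\pguess)=\WW$ and the monotone relations this leaves no room for an exit or a swap strictly inside the interval, so $\cI$ --- hence $m(\cdot)$ --- is constant there and $\pelec(\pguess)=c_{\bi}+\pguess\,e_{\bi}$. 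The main obstacle is precisely this ``exactly one / exactly two'' counting at $\opguess$ (where Assumption \ref{ass:producers}(ii) is indispensable) and the constancy of the marginal producer on $[\pguess,\opguess]$, which demands squeezing $\cW(\pguess)=\WW$, the sandwich $\cW\le\WW\le\ocW$, the monotonicities and $\ocW-\cW=e_m(\kappa_m-\varphi_m)$ tightly together, all the while handling the flat-demand subtleties of $\pelec=\op(\bfs)$ through Assumption \ref{ass:ElecClearingPrice}.
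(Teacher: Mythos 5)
Your proposal is correct and follows essentially the same route as the paper, whose entire proof is the two-line observation that the dichotomy ``follows directly from the fact that $\ocW(\cdot)$ is a staircase function, and from the fact that the producers are two by two different'': your exit/swap classification of the jumps of $\cI(\cdot)$, the location of $\opguess$ as a jump point of $\ocW$, and the derivation of the Case~A price identities are a faithful (and far more detailed) unpacking of exactly those two facts. The one caveat you share with the paper is the claim that pairwise distinctness of the $(c_j,e_j)$ excludes three or more merit-order lines $\tau\mapsto c_j+\tau e_j$ meeting at $\opguess$ --- strictly speaking it does not, so both arguments implicitly assume this non-degeneracy.
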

\begin{proof}
This follows directly from the fact that $\ocW(\cdot)$ is a staircase function, and from the fact 
that the producers are two by two different.
\end{proof}

We define a strategy profile on the coupled market in the two cases.
\begin{definition}
We define the strategy profile 
${(\bfA,\bfs)}^{\ast} = ( (s_{1}^{\ast},A_{1}^{\ast}), \cdots (s_{J}^{\ast},A_{J}^{\ast}))$, 
where $s^{\ast}_{j} := C_{j}$  is the {\it marginal production cost strategy} and 
$A_{j}^{\ast}$ is defined as follows (depending whether any of the two cases occurs):
\item{\bf Case A.}
\begin{equation}
\begin{aligned}
&\mbox{For } \bi,&  \quad w\mapsto A^\ast_{\bi}(w) := & 
\displaystyle \left(\opguess + \delta \right) \ind_{\displaystyle\{w < e_{\bi} \varphi_{\bi}(\opguess) - \epsilon\}} \\
& & & + 
\displaystyle \pguess \ind_{\displaystyle\{ e_{\bi} \varphi_{\bi}(\opguess) - \epsilon < w \leq e_{\bi} \kappa_{\bi} \}}\\
&\mbox{For } k \neq \bi,& \quad  
w\mapsto A^\ast_k(w) := & \frac{1}{e_k}\left( \pelec(\opguess) - c_k\right) \ind_{\{w \leq e_k \kappa_{k} \}} 
\end{aligned} 
\end{equation}

\item{\bf Case B.}
\begin{equation*}
\begin{aligned} 
&\mbox{For } i_l,&  w\mapsto A^\ast_{i_{l}}(w) := &
\displaystyle \left(\opguess + \delta\right) 
\ind_{\displaystyle\{0 < w \leq \WW-\ocW({\opguess}^+)-\epsilon \} } \\
& &  &+ \opguess \ind_{\displaystyle\{\WW-\ocW({\opguess}^+) - \epsilon < w \leq e_{i_l}\kappa_{i_l} \}} \\
~\\
&\mbox{For } i_r,& w\mapsto A^\ast_{i_r}(w) := &  \left(\opguess + \delta \right) \ind_{\{w \leq e_{i_r}\kappa_{i_r}\}} \\
~\\
&\mbox{For } k \not \in \{i_l, i_r\},& w\mapsto A^\ast_k(w) := & \frac{1}{e_k}\left(\pelec(\opguess) - c_k\right)\ind_{\{w\leq e_k\kappa_k\}}. 
\end{aligned}
\end{equation*}
\end{definition}

Now we can state our main result~:
\begin{theorem}\label{propo:Nashcoupled}
Under Assumptions \ref{ass:producers}, \ref{ass:ElecClearingPrice} and \ref{ass:hypoTW}, one can identify  $(\epsilon, \delta)$ such that $(\bfA^\ast, \bfs^\ast )$ is a Nash equilibrium. 
For this equilibrium the following applies
\item{(i)} the carbon price is $\pguess$
\item{(ii)} the electricity price is $\pelec(\pguess)$ 
\item{(iii)} for any producer, the quantity of allowance bought is null if the quantity of electricity sold is null,
\end{theorem}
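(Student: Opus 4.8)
The plan is to verify the Nash equilibrium condition \eqref{NashGlob} by analyzing separately the two components of a unilateral deviation. Since $\bfs^\ast = \bfC$ is the marginal production cost strategy profile, Proposition \ref{propo-Nash}(i) already tells us that, holding the carbon positions $\bfA$ fixed, no producer can gain on the electricity market by deviating from $C_j$. So the heart of the argument is that no producer $j$ can profit from altering its carbon bid $A_j$: the only way such a deviation affects $\varphi_j$ is by changing the realized carbon price $\pcar$ and/or the quantity $\delta_j$ of allowances won, which in turn reshapes the marginal cost curve $c_j^\bfA$ via \eqref{coutsRegulation}, and hence reshapes the ask curve and the electricity clearing. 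The key claim to establish first is that, at the profile $(\bfA^\ast,\bfs^\ast)$, the carbon market clears at price $\pcar = \pguess$ with each producer $j$ receiving exactly $\delta_j = e_j\varphi_j(\pguess)$ allowances — enough to cover precisely the emissions of its equilibrium electricity output. This uses Lemma \ref{lem:couplage-continuite}(i), i.e. $\cW(\pguess)=\WW$, together with the explicit staircase shapes of the $A_k^\ast$ and the second-price clearing rule \eqref{reacmarche-prix-quotas}–\eqref{reacmarche-qantite-quotas}; the parameters $(\epsilon,\delta)$ are chosen small enough that the extra step at level $\opguess+\delta$ in the bids of $\bar i$ (Case A) or $i_l,i_r$ (Case B) does not displace the clearing point away from $\WW$ at price $\pguess$, yet the bid structure is rigid enough to deter deviations. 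Once this is shown, claims (i), (ii) follow: the carbon price is $\pguess$, and since $c_j^{\bfA^\ast}(q) = c_j + e_j\pguess$ for $q$ up to $\delta_j/e_j = \varphi_j(\pguess)$ and the ask is $C_j$, the electricity clears at $\pelec(\pguess)$ by definition of the $\tau$-parametrized quantities; claim (iii) is immediate since $\delta_j = e_j\varphi_j(\pguess) = 0$ whenever $\varphi_j(\pguess)=0$.

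Next I would treat the deviation analysis. Fix producer $j$ and an admissible $(A_j,s_j)$. Decompose: first bound the effect of changing $s_j$ alone (with $A_j=A_j^\ast$) using Proposition \ref{propo-Nash}(i) applied to the electricity game with marginal costs $c_k^{\bfA^\ast}$ — this gives $\varphi_j \le \varphi_j(\bfC_{-j};C_j) = \varphi_j(\pguess)$. Then I would argue that changing $A_j$ cannot help. Two subcases. If producer $j$ bids less aggressively and obtains fewer allowances, then on the interval where it previously had cost $c_j + e_j\pguess$ it now pays the penalty $\Pnlt > \pguess$, so by \eqref{coutsRegulation} its marginal cost curve only goes up pointwise; by Remark \ref{property:offre croissante}(ii) its ask size at every price weakly decreases, so under the clearing rules its share $\varphi_j$ cannot increase (here Assumption \ref{ass:ElecClearingPrice}, the choice $\pelec=\op$, is what makes the clearing monotone as claimed just before the theorem). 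If instead producer $j$ bids more aggressively, it can raise the carbon price above $\pguess$; but raising $\pcar$ raises the $e_k\pcar$ term in every producer's cost — including $j$'s own cost on its covered portion — and by Proposition \ref{pelecCroissante} / the monotonicity through the cost parameter this does not enlarge $j$'s equilibrium share either (the design inequalities $\cW(0) > \WW > \ocW(\Pnlt)$ of Assumption \ref{ass:hypoTW} guarantee $\pguess \in (0,\Pnlt)$, so there is genuine room but no free lunch). The rigidity of the $A_k^\ast$ of the non-deviating producers — their bid is a single flat level at $\tfrac{1}{e_k}(\pelec(\opguess)-c_k)$ up to capacity — is what pins the clearing so that a unilateral push by $j$ only moves the price against $j$'s interest or leaves $\delta_j$ unchanged.

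The step I expect to be the main obstacle is the correct choice of $(\epsilon,\delta)$ and the case distinction at $\opguess$. One must choose $\delta>0$ small enough that $\opguess+\delta$ stays below the penalty $\Pnlt$ and below the "next" relevant cost threshold, and $\epsilon>0$ small enough that the quantities $e_{\bar i}\varphi_{\bar i}(\opguess)-\epsilon$ and $\WW - \ocW(\opguess^+)-\epsilon$ remain in the correct order relative to the staircase levels of $\ocW$ identified in Lemma \ref{lem:couplage-continuite}(ii) and the case split (Case A: one critical producer $\bar i$; Case B: a left/right pair $i_l,i_r$). Getting these inequalities simultaneously consistent, and checking in Case B that neither $i_l$ nor $i_r$ can profitably deviate — $i_l$ is the one whose partial fill is determined by the proportional rule \eqref{reacmarche-qantite-quotas}, and its bid must be structured so that bidding higher gains no extra allowances while bidding lower loses the marginal block it needs — is the delicate bookkeeping. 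Everything else is an application of the monotonicity facts (Remark \ref{property:offre croissante}, Propositions \ref{propo-Nash} and \ref{pelecCroissante}) already established.
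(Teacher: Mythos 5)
First, a caveat: the paper itself does not contain a proof of Theorem \ref{propo:Nashcoupled}; it only states that the proof ``relies on the analysis of any possible deviation'' and defers to \cite{preprint-BMP}. Your overall architecture --- establish the carbon clearing at the candidate profile, then rule out deviations by splitting them into an electricity component (handled by the dominance in Proposition \ref{propo-Nash}(i)) and a carbon component (handled by monotonicity) --- is consistent with that indicated strategy. However, there are two concrete gaps. (1) Your anchor claim that at $(\bfA^\ast,\bfs^\ast)$ every producer receives exactly $\delta_j=e_j\varphi_j(\pguess)$ is inconsistent with the construction you are supposed to be analyzing: in Case A every active producer $k\neq\bi$ bids the flat level $\frac{1}{e_k}(\pelec(\opguess)-c_k)$ up to $w=e_k\kappa_k$, so if that level exceeds the clearing price it is awarded its \emph{full-capacity} quantity $e_k\kappa_k$ of allowances, not $e_k\varphi_k(\pguess)$; the whole point of the $\ocW$ function and of Assumption \ref{ass:hypoTW} ($\WW>\ocW(\Pnlt)$) is that the equilibrium bids are pitched at capacity, with only the critical producer $\bi$ (resp.\ $i_l$) truncating its demand so that the aggregate crosses $\WW$ at $\pguess$ rather than at $\opguess$. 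Statement (iii) of the theorem asserts only that \emph{inactive} producers buy nothing, not that active ones buy exactly their emissions, and your derivation of (ii) from the identity $\delta_j/e_j=\varphi_j(\pguess)$ therefore needs to be redone using $\delta_k/e_k=\kappa_k$ for the non-critical active producers.

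(2) The deviation step ``if producer $j$ obtains fewer allowances then its marginal cost curve only goes up pointwise'' is false in general: a less aggressive bid by $j$ weakly \emph{lowers} the clearing price $\pcar$, so by \eqref{coutsRegulation} the deviator's cost on its (smaller) covered portion \emph{decreases}, and simultaneously all competitors' costs decrease on their covered portions. The comparison of market shares after such a deviation is therefore not a pointwise cost comparison to which Remark \ref{property:offre croissante}(ii) applies directly; likewise Proposition \ref{pelecCroissante} concerns the symmetric parametrization where all producers face the same $\tau$, and cannot be invoked for the asymmetric cost profiles produced by a unilateral carbon deviation (some producers covered to capacity at price $\pcar$, the deviator partially covered, inactive producers at the penalty rate). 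Finally, the theorem's content is precisely that one can \emph{identify} $(\epsilon,\delta)$; you correctly locate where those constants matter but leave the required inequalities (that the step of $A^\ast_{\bi}$, resp.\ $A^\ast_{i_l}$, pins the clearing at $\pguess$ and that no single producer can profitably move it) unverified, so the proposal remains a plan rather than a proof.
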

The proof of this proposition relies on the analysis of any possible deviation. 
It can be found in \cite{preprint-BMP}.
Note that the expression of the equilibrium is explicit, which allows to explicit the quantity of emission. This allows a analysis of the impact of such markets on the overall emission.

\section{Conclusion}
Once emitted into the atmosphere, \co2  will remain there for more than a century. Estimating its value is an essential indicator for efficiently defining policy. Therefore, carbon valuation remains a main issue in order to design markets fostering emission reductions. In this paper, we established the links between an electricity market and a carbon auction market through the analysis of electricity producers strategies. They have been proven to lead to a Nash equilibrium enabling the computation of equilibrium prices on both markets. This equilibrium derives, for each producer, in a level electricity produced and \co2  emissions covered. Beyond the analysis of the Nash equilibrium, we envisage the analysis of the electricity production mix, with a particular focus on renewable shares which do not participate to emissions.


\begin{thebibliography}{1}

\bibitem{basar-olsder-98}
Tamer Ba{\c{s}}ar and Geert~Jan Olsder.
\newblock {\em Dynamic Noncooperative Game Theory, 2nd Edition}.
\newblock Society for Industrial and Applied Mathematics, 1998.

\bibitem{bossy-maizi-al-05}
Mireile Bossy, Nadia Ma{\"{\i}}zi, Geert~Jan Olsder, Odile Pourtallier, and
  Etienne Tanr{\'e}.
\newblock Electricity prices in a game theory context.
\newblock In {\em Dynamic games: theory and applications}, volume~10 of {\em
  GERAD 25th Anniv. Ser.}, pages 135--159. Springer, New York, 2005.

\bibitem{preprint-BMP}
Mireile Bossy, Nadia Ma{\"{\i}}zi, and Odile Pourtallier.
\newblock Design analysis of carbon auction market, through electricity market
  coupling.
\newblock preprint online in HAL, 2013.

\bibitem{carmona-coulon-schwarz-13}
Ren{\'e} Carmona, Michael Coulon, and Daniel Schwarz.
\newblock Electricity price modeling and asset valuation: a multi-fuel
  structural approach.
\newblock {\em Mathematics and Financial Economics}, 7(2):167--202, 2013.

\bibitem{carmona-coulon-schwarz-13a}
Ren{\'e} Carmona, Michael Coulon, and Daniel Schwarz.
\newblock The valuation of clean spread options: Linking electricity, emissions
  and fuels.
\newblock To appear in Quantitative Finance.

\bibitem{carmona-delarue-etal-13}
Ren\'e Carmona, Fran{\c{c}}ois Delarue, Gilles-Edouard Espinosa, and Nizar
  Touzi.
\newblock Singular forward-backward stochastic differential equations and
  emissions derivatives.
\newblock {\em Ann. Appl. Probab.}, 23(3):1086--1128, 2013.

\end{thebibliography}
\end{document}